\newcommand{\blankSpace}[0]{\rule{30mm}{.4pt}}
\newcommand{\ruleset}[1]{\textsc{#1}}
\newcommand{\cproblem}[1]{\ensuremath{\textsc{#1}}} 
\newcommand{\cclass}[1]{\ensuremath{\mathord{\rm #1}}} 
\newcommand{\BigAnd}{\displaystyle\bigwedge}
\newtheorem{definition}{Definition}
\newtheorem{theorem}{Theorem}
\newtheorem{lemma}{Lemma}
\title{$2^3$ Quantified Boolean Formula Games and Their Complexities}
\author{Kyle Burke\\Colby College\\kgburke@colby.edu}
\begin{document}

\maketitle

\begin{abstract}
  Consider \cproblem{QBF}, the Quantified Boolean Formula problem, as a combinatorial game ruleset.  The problem is rephrased as determining the winner of the game where two opposing players take turns assigning values to Boolean variables.  In this paper, three variations of games are applied to create seven new games: whether each player is restricted to where they may play,  which values they may set variables to, or whether conditions they are shooting for at the end of the game differ.  The complexity for determining which player can win is analyzed for all games.  Of the seven, two are trivially in \cclass{P} and the other five are \cclass{PSPACE}-complete.  These varying properties are common for combinatorial games; reductions from these five hard games can simplify the process for showing the \cclass{PSPACE}-hardness of other games.
\end{abstract}

\section{Introduction}

\subsection{Combinatorial Game Theory}

Two-player games with alternating turns, perfect information, and no random elements are known as \emph{combinatorial games}.  Combinatorial Game Theory determines which of the players has a winning move from any \emph{position} (game state).  Many of the elegant aspects of CGT are ignored here, such as how to determine optimal moves and how to add two games together.  The interested reader is encouraged to browse \cite{WinningWays:2001} and \cite{LessonsInPlay:2007}.  

\subsection{Quantified 3SAT}

In computational complexity, the quantified Boolean satisfiability problem, \cproblem{QBF}, consists of determining whether formulas of the form are true.  $\exists x_0: \forall x_1: \exists x_2: \forall x_3: \cdots Q_{n-1} x_{n-1}: f(x_0, x_1, x_2, \ldots, x_{n-1})$, where $f$ is a Boolean formula and $Q_i = \begin{cases} \exists &, i \mbox{ is even} \\ \forall &, i \mbox{ is odd} \end{cases}$. \cproblem{QBF} is commonly considered the fundamental problem for the complexity class \cclass{PSPACE}, the set of all problems that can be solved using workspace polynomial in the size of the input.  In other words, \cclass{PSPACE} can be rephrased as the set of all problems that can be efficiently reduced to \cproblem{QBF}.

Each instance can be reconsidered as a combinatorial game between two players: Even/True and Odd/False.  The initial position consists of a list of the indexed literals and the unquantified formula, $f$.  Even/True moves first, choosing a value for $x_0$.  Odd/False goes next, choosing a value for $x_1$.  The players continue taking turns in this manner until all variables have been assigned.  At this point, the value of $f$ is determined.  If $f$ is true, Even/True wins, otherwise Odd/False wins.

\subsubsection{Sample QBF Game}
\label{section:sampleQBF}

For example, consider the position with formula: $(\overline{x_0} \vee x_3 \vee \overline{x_1}) \wedge (x_2 \vee x_1 \vee \overline{x_6}) \wedge (x_4 \vee \overline{x_6} \vee x_0) \wedge (\overline{x_2} \vee \overline{x_4} \vee x_3)$ and no assignments.  Then the following would be a legal sequence of turns:

\begin{itemize}
  \item  Even/True chooses True (\textbf{T}) for $x_0$.  The players are likely keeping track of the assignments and updating the formula:\\ 
    \begin{tabular}{l}
	  $(\textbf{F} \vee x_3 \vee \overline{x_1}) \wedge (x_2 \vee x_1 \vee \overline{x_6}) \wedge (x_4 \vee \overline{x_6} \vee \textbf{T}) \wedge (\overline{x_2} \vee \overline{x_4} \vee x_3)$ \\
      $=(x_3 \vee \overline{x_1}) \wedge (x_2 \vee x_1 \vee \overline{x_6}) \wedge (\overline{x_2} \vee \overline{x_4} \vee x_3)$
	\end{tabular}
  \item Next, Odd/False chooses \textbf{T} for $x_1$.  Formula:\\
	\begin{tabular}{l}
	  $(x_3 \vee \textbf{F}) \wedge (x_2 \vee \textbf{T} \vee \overline{x_6}) \wedge (\overline{x_2} \vee \overline{x_4} \vee x_3)$ \\
	  $ = x_3 \wedge (\overline{x_2} \vee \overline{x_4} \vee x_3)$
	\end{tabular}
  \item Even/True (not feeling very confident at this point), chooses \textbf{F} for $x_2$:\\
	\begin{tabular}{l}
	  $x_3 \wedge (\textbf{T} \vee \overline{x_4} \vee x_3)$ \\
	  $ = x_3$
	\end{tabular}
  \item Odd/False triumphantly chooses \textbf{F} for $x_3$.  With the assignments, the evaluated formula will be False in the end.  Despite this, the players may continue taking their turns:
  \item Even/True chooses \textbf{T} for $x_4$.
  \item Odd/False chooses \textbf{F} for $x_5$.  (Notice that no instances of $x_5$ exist in the formula.)
  \item Even/True chooses \textbf{F} for $x_6$.
  \item The assignments cause the formula to be false; Odd/False wins.
\end{itemize}

\subsection{Algorithmic Combinatorial Game Theory}

Notice that determining, from the initial position, whether the Even/True player can win the game is exactly the same problem as determining whether the \cproblem{QBF} instance is true.  Due to this equivalence, we will abuse notation slightly and refer both this ruleset and the computational problem as \cproblem{QBF}.  A position in this ruleset is the unquantified formula, $f$, and the list of $n$ indexed Boolean variables, with assignments for the first $k-1$ ($0 \leq k \leq n$).  

The lack of distinction between computational problems and combinatorial rulesets is not specific to \ruleset{QBF}.  It is common to use a ruleset's name to refer to the computational complexity of a ruleset by the induced problem of determining which player can win.  For example, we say that \ruleset{Nim} is in \cclass{P}.  (To be even more specific, \ruleset{Nim} is in $O(n)$ where $n$ is the number of heaps of sticks.)  The study of algorithms and computational complexity to determine the winner is known as algorithmic combinatorial game theory \cite{AlgGameTheory_GONC3}.  Much of this paper is concerned with the computational hardness for new games based on formula satisfiability.  

\subsection{Another SAT Game: \ruleset{Positive CNF}}
\label{section:positive-cnf}

Many other Boolean formula satisfiability games have been defined \cite{DBLP:journals/jcss/Schaefer78}.  We go into more detail for one of them, $G_{POS}$(POS CNF)---here referred to as \ruleset{Positive CNF}---as it will be used in later proofs.  Other satisfiability games from \cite{DBLP:journals/jcss/Schaefer78} are highly recommended to the interested reader.

\begin{definition}[Positive CNF]
  \emph{\ruleset{Positive CNF}} is the ruleset for games played on a Boolean 3-CNF formula, $f$, using $n$ variables without including any negations.  The players are indicated True and False.  Each turn, the current player chooses any one unassigned variable and assigns it the value corresponding to their name.  When all variables are assigned, True wins if the value of the formula is true, otherwise False wins.
\end{definition}

\ruleset{Positive CNF} is known to be \cclass{PSPACE}-complete \cite{DBLP:journals/jcss/Schaefer78}, which will make two of the five reductions (sections \ref{section:by-player-anywhere-different} and \ref{section:either-anywhere-different}) very simple.

\section{Three Ruleset Toggles}

Some \cclass{PSPACE}-hard combinatorial games are difficult to reduce to from \cproblem{QBF}.  These target games often have properties---such as being able to play anywhere on the ``board''---very dissimilar from \cproblem{QBF}.  There are three common properties of games that we vary to modify \cproblem{QBF}.  We refer to each property as a ruleset \emph{toggle}, since each has exactly two possible values.

\subsection{Play Location}

In \cproblem{QBF}, on the $i^{th}$ turn, the current player assigns a value to variable $x_i$.  Many combinatorial games are more flexible, allowing the next player to play wherever on the board they would like.  We can model that by allowing a variant of \ruleset{QBF} where the current player may choose to play at any unassigned variable.  We refer to this toggle as \emph{locality} with possible values \emph{local} and \emph{anywhere}. 

\begin{itemize}
  \item \textbf{local}: The current player assigns a value to the unassigned variable, $x_i$, with lowest index.
  \item \textbf{anywhere}: The current player assigns a value to any one of the unassigned variables.
\end{itemize}

\subsection{Boolean Choice}

In \cproblem{QBF}, each player chooses to set a variable either true or false.  However, in rulesets such as Snort, a player is identified by their color (either Red or Blue) and may only play pieces of that color.  We model this with the toggle \emph{Boolean identity}, with possible values \emph{either} and \emph{by player}.

\begin{itemize}
  \item \textbf{either}: The current player assigns either true or false to a variable.
  \item \textbf{by player}: One player only assigns variables to true, the other player only assigns false.  
\end{itemize}

\subsection{Goal}

In \ruleset{QBF}, one player is trying to force the formula to have the value true, while the other one is vying for false.  In all impartial games (and others) both players have the same goal; the one who makes the last move to reach that goal wins the game.  We model that with the \emph{goal} toggle, with values \emph{different} and \emph{same}.  When the goal is the same, players avoid creating a formulas with some property.  Unfortunately, we cannot simply use contradictions (or tautologies) as the illegal formulas.

\subsubsection{Contradictions are not Enough}

There are two reasons it is impractical to avoid contradictions:

\begin{itemize}
 \item Determining whether a boolean formula is satisfiable is NP-hard.  Thus, it would be computationally difficult to determine when the formula is a contradiction and the game is over.  Trying to enforce this would be extra onerous.  
 \item If contradictions are not legal positions, then each position's formula must be satisfiable.  That means all unset variables can be used, and the number of remaining turns in the game is known; determining who will win is trivial.
\end{itemize}

\subsubsection{Blatantly False}

Something a bit more complicated is necessary.  We define an alternative to contradictions: \emph{blatantly false} formulas.

\begin{definition}[Blatantly False]
  A \emph{blatantly false} formula is one which is either:
   \begin{itemize}
     \item a false-assigned literal, 
     \item the negation of a blatantly true subformula,
     \item the or of multiple subformulas, all of which are, recursively, blatantly false, or
     \item the and of multiple subformulas, one of which is, recursively, blatantly false.
   \end{itemize}
\end{definition}

  \emph{Blatantly True} is defined analagously:
	\begin{itemize}
	  \item a true-assigned literal,
	  \item the negation of a blatantly false subformula,
	  \item the or of multiple subformulas, one of which is, recursively, blatantly true, or
	  \item the and of multiple subformulas, all of which are, recursively, blatantly true.
	\end{itemize}

The following are two examples to illuminate this idea:

\begin{itemize}
  \item $\overline{x_i} \wedge \left(\textbf{F} \wedge (x_j \vee x_k)\right) \wedge x_l$ is blatantly false.
  \item $x_i \wedge \overline{x_i}$ is not blatantly false, even though it is a contradiction.  (It is easy to determine whether a formula is blatantly false, which is not necessarily the case with contradictions.)
\end{itemize}

With this notion, we can define the two possible values for the goal toggle:

\begin{itemize}
  \item \textbf{different}: One player is trying to set the formula to true, the other to false.  After all variables are assigned, the formula is evaluated.  Whichever player has reached their target value wins.
  \item \textbf{same}: Players are not allowed to assign a variable such that the resulting formula is blatantly false.  If a player cannot move, they lose the game---the usual end-of-game condition for combinatorial games.  This means that there are no legal moves on the formula: $x_0 \wedge \overline{x_0}$; the first player would automatically lose.  Also, if the formula is true at the end of the game, the last player to choose the value of a variable wins.
\end{itemize}

\subsection{QBF Categorized}

\ruleset{QBF}, then, is the \ruleset{either-local-different} ruleset: players have to play on the next variable, are allowed to set the value to either true or false, and one is shooting to make the formula true, while the other tries to make it false.  There are seven total other rulesets generated by changing these variables.  The following sections cover each of these and analyze their computational complexity.  These results are summarized in table \ref{table:main}.

\begin{table}
  \begin{center}
	\begin{tabular}{|c|c|c||c|c|}
	  \hline
	  Boolean choice & play location & goal & section & complexity \\ \hline \hline
	  by player & local & same & \ref{section:by-player-local} & In \cclass{P} \\ \hline
	  by player & local & different & \ref{section:by-player-local} & In \cclass{P} \\ \hline
	  by player & anywhere & same & \ref{section:by-player-anywhere-same} & \cclass{PSPACE}-complete\\ \hline
	  by player & anywhere & different & \ref{section:by-player-anywhere-different} & \cclass{PSPACE}-complete \\ \hline
	  either & local & same & \ref{section:either-local-same} & \cclass{PSPACE}-complete\\ \hline
	  either & local & different & none & \cclass{PSPACE}-complete\\ \hline
	  either & anywhere & same & \ref{section:either-anywhere-same} & \cclass{PSPACE}-complete\\ \hline
	  either & anywhere & different & \ref{section:either-anywhere-different} & \cclass{PSPACE}-complete\\ \hline
	\end{tabular}
  \end{center}
  \caption{QBF Ruleset Complexities}
  \label{table:main}
\end{table}

\section{By-Player-Local-X Rulesets}
\label{section:by-player-local}

Two of the rulesets generated are trivial for determining the winner: \ruleset{by-player-local-same} and \ruleset{by-player-local-different}.  In the games where players have neither a choice of the Boolean value nor the location to play, there is only one possible sequence of moves.  To figure out which player will win, a program needs only simulate the moves, adhering to the goal condition.   

For example, consider the \ruleset{by-player-local-same} initial position with the formula described in section \ref{section:sampleQBF}: $(\overline{x_0} \vee x_3 \vee \overline{x_1}) \wedge (x_2 \vee x_1 \vee \overline{x_6}) \wedge (x_4 \vee \overline{x_6} \vee x_0) \wedge (\overline{x_2} \vee \overline{x_4} \vee x_3)$.  Then the following would be the sequence of turns:

\subsection{Sample Game}

\begin{itemize}
  \item  Even/True must assign \textbf{T} to $x_0$.  We keep track by updating the formula:\\ 
    \begin{tabular}{l}
	  $(\textbf{F} \vee x_3 \vee \overline{x_1}) \wedge (x_2 \vee x_1 \vee \overline{x_6}) \wedge (x_4 \vee \overline{x_6} \vee \textbf{T}) \wedge (\overline{x_2} \vee \overline{x_4} \vee x_3)$ \\
      $=(x_3 \vee \overline{x_1}) \wedge (x_2 \vee x_1 \vee \overline{x_6}) \wedge (\overline{x_2} \vee \overline{x_4} \vee x_3)$
	\end{tabular}
  \item Next, Odd/False assigns \textbf{F} to $x_1$.  Formula:\\
	\begin{tabular}{l}
	  $(x_3 \vee \textbf{T}) \wedge (x_2 \vee \textbf{F} \vee \overline{x_6}) \wedge (\overline{x_2} \vee \overline{x_4} \vee x_3)$ \\
	  $ = (x_2 \vee \overline{x_6}) \wedge (\overline{x_2} \vee \overline{x_4} \vee x_3)$
	\end{tabular}
  \item Even/True assigns \textbf{T} to $x_2$:\\
	\begin{tabular}{l}
	  $(\textbf{T} \vee \overline{x_6}) \wedge (\textbf{F} \vee \overline{x_4} \vee x_3)$ \\
	  $ = \overline{x_4} \vee x_3$
	\end{tabular}
  \item Odd/False assigns \textbf{F} to $x_3$:\\
	\begin{tabular}{l}
	  $\overline{x_4} \vee \textbf{F}$ \\
	  $ = \overline{x_4}$
	\end{tabular}
  \item Even/True now cannot make a move.  (Their only normal option: assigning \textbf{T} to $x_4$, would cause the formula to be blatantly false, so they cannot move to that position.)  Odd/False wins!
\end{itemize}

In \ruleset{by-player-local-different} with this same formula, Odd/False also wins because the formula evaluates to false by the end.

\section{By-Player-Anywhere-Same}
\label{section:by-player-anywhere-same}

The \ruleset{by-player-anywhere-same} QBF ruleset consists of the games with two players, True and False, each avoiding creating an blatantly false formula, while allowed to play anywhere on the board.  

\subsection{Sample Game}

For example, consider the initial position with the same formula given in section \ref{section:sampleQBF}: $(\overline{x_0} \vee x_3 \vee \overline{x_1}) \wedge (x_2 \vee x_1 \vee \overline{x_6}) \wedge (x_4 \vee \overline{x_6} \vee x_0) \wedge (\overline{x_2} \vee \overline{x_4} \vee x_3)$.  The following is a legal sequence of plays from the initial position:

\begin{itemize}
  \item True chooses $x_3$.  The partially-evaluated formula now looks like:\\
    \begin{tabular}{l}
      $(\overline{x_0} \vee \textbf{T} \vee \overline{x_1}) \wedge (x_2 \vee x_1 \vee \overline{x_6}) \wedge (x_4 \vee \overline{x_6} \vee x_0) \wedge (\overline{x_2} \vee \overline{x_4} \vee \textbf{T})$ \\
      $ = (x_2 \vee x_1 \vee \overline{x_6}) \wedge (x_4 \vee \overline{x_6} \vee x_0)$ \\
    \end{tabular}
  \item False chooses $x_1$.  (Notice that choosing $x_6$ would cost False the game since there would be an odd number of moves remaining.) \\
    \begin{tabular}{l}
      $(x_2 \vee \textbf{F} \vee \overline{x_6}) \wedge (x_4 \vee \overline{x_6} \vee x_0)$ \\
      $ = (x_2 \vee \overline{x_6}) \wedge (x_4 \vee \overline{x_6} \vee x_0)$
	\end{tabular}
  \item True chooses $x_2$:\\
    \begin{tabular}{l}
      $ (\textbf{T} \vee \overline{x_6}) \wedge (x_4 \vee \overline{x_6} \vee x_0)$ \\
      $ = x_4 \vee \overline{x_6} \vee x_0$
	\end{tabular}
  \item False chooses $x_4$:\\
    \begin{tabular}{l}
      $\textbf{F} \vee \overline{x_6} \vee x_0$ \\
      $\overline{x_6} \vee x_0$
	\end{tabular}
  \item True chooses $x_0$:\\
    \begin{tabular}{l}
      $\overline{x_6} \vee \textbf{T}$ \\
      $ = \textbf{T}$ \\
	\end{tabular}
	
	The formula now always evaluates to True.  (True is going to win because there are an even number of moves left.)
  \item False chooses $x_6$.
  \item True chooses $x_5$.  There are no more variables to assign, so True has made the last move and wins!
\end{itemize}

\subsection{\cclass{PSPACE}-completeness}

\begin{theorem}[\ruleset{by-player-anywhere-same} \cclass{PSPACE}-completeness.]
  Ruleset \ruleset{by-player-anywhere-same} is \cclass{PSPACE}-complete.  
  
  \begin{proof}
  To show hardness, we reduce from the well-known \cclass{PSPACE}-complete ruleset \ruleset{Snort} \cite{DBLP:journals/jcss/Schaefer78}.  A \ruleset{Snort} position consists of a graph, with some vertices painted blue, some red, and the rest uncolored.  The two players, Blue and Red, take turns choosing an uncolored vertex and painting it their own color.  Players are not allowed to paint vertices adjacent to the opposite color.  The first player who cannot play loses.
  
	The reduction is as follows.  Let $G = (V, E)$ be the \ruleset{Snort} graph and let $V = \{0, \ldots,n-1\}$.  The set of literals for our formula is $x_0, x_1, \ldots, x_{n-1}$.  Blue corresponds to True and Red corresponds to False.  For any edge, $(i, j) \in E$, we include two clauses: $(x_i \vee \overline{x_j}) \wedge (\overline{x_i} \vee x_j)$.  In the case where any of $i$ and $j$ are painted, the resulting formula is blatantly false only when they have opposing colors.  Thus, the only moves a player is not allowed to make correspond exactly to illegal moves in \ruleset{Snort}.  
	
	To get the overall formula, we conjoin all pieces together: $\BigAnd_{(i, j) \in E} (x_i \vee \overline{x_j}) \wedge (\overline{x_i} \vee x_j)$
	
	Since the winnability of the two games are equal, the \ruleset{by-player-anywhere-same} ruleset is \cclass{PSPACE}-hard.
  \end{proof}
\end{theorem}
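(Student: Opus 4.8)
The plan is to establish the two halves of \cclass{PSPACE}-completeness separately: containment in \cclass{PSPACE} and \cclass{PSPACE}-hardness. For containment I would observe that a position is an unquantified formula together with a partial assignment, and that every move assigns one of the at most $n$ variables, so the game tree has depth at most $n$. Deciding whether the player to move wins is then the usual alternating recursion: the current player wins exactly when some legal move leads to a position that is losing for the opponent, with the base case that a player facing no legal move loses. Since testing whether a candidate assignment yields a blatantly false formula is a direct polynomial-time structural check, and a depth-first traversal stores only the current root-to-leaf path (each node a polynomial-size partial assignment), the recursion runs in polynomial space. Hence the ruleset lies in \cclass{PSPACE}.

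For hardness I would reduce from \ruleset{Snort}, using one variable $x_i$ per vertex $i$, one clause pair $(x_i \vee \overline{x_j}) \wedge (\overline{x_i} \vee x_j)$ per edge $(i,j)$, and the full formula $\BigAnd_{(i,j) \in E} (x_i \vee \overline{x_j}) \wedge (\overline{x_i} \vee x_j)$, with Blue corresponding to True and Red to False. The engine of the reduction is a lemma I would prove first: for an edge $(i,j)$ the pair $(x_i \vee \overline{x_j}) \wedge (\overline{x_i} \vee x_j)$ is blatantly false precisely when $x_i$ and $x_j$ are both assigned and hold opposite values, and is never blatantly false otherwise (a four-way case check on assigned values, together with a check that a partially assigned pair contains no blatantly false disjunct). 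Because the whole formula is a conjunction, it is blatantly false iff some edge's pair is, so assigning $x_i$ the value matching a player's name is illegal exactly when vertex $i$ is adjacent to an oppositely-colored vertex---identical to \ruleset{Snort}'s adjacency restriction---while the by-player toggle matches each player being confined to a single color.

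This yields a move-for-move correspondence between the two game trees under the dictionary (assign True $=$ paint Blue, assign False $=$ paint Red), from which I would conclude that the winner is the same. As the formula has size linear in $|E|$ and is computable in polynomial time, this establishes \cclass{PSPACE}-hardness, and combined with containment gives \cclass{PSPACE}-completeness.

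The step I expect to be the real obstacle is reconciling the terminal conditions, which the bare move-correspondence does not settle. \ruleset{Snort} is strict normal play---a player who cannot move loses---whereas \ruleset{by-player-anywhere-same} additionally declares that if the formula is true at the end the last mover wins. I would resolve this by proving that for a fully assigned formula ``not blatantly false'' coincides with ``true'' (induct on structure: a fully assigned literal is true iff not false, an or is false iff all disjuncts are, hence iff all are blatantly false, and dually for an and), so any play avoiding blatant falseness all the way to a complete assignment necessarily terminates with a true formula; the ``last mover wins'' clause then agrees with normal play, since a complete assignment leaves the opponent no move. The same reasoning covers games that stall early: any variables still unassigned must be free of active constraints, matching isolated uncolored vertices in \ruleset{Snort}, so the remaining plays and their parities agree in both games.
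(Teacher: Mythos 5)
Your proof is correct and uses the same reduction as the paper: the identical \ruleset{Snort}-to-formula construction with Blue $=$ True, Red $=$ False, and the edge gadget $(x_i \vee \overline{x_j}) \wedge (\overline{x_i} \vee x_j)$ conjoined over all edges. You are in fact more thorough than the paper, which omits both the \cclass{PSPACE}-membership argument and the reconciliation of \ruleset{Snort}'s normal-play ending with this ruleset's last-mover clause; your added arguments (the depth-first polynomial-space game-tree search, the four-way case check on the edge gadget, and the observation that a fully assigned formula is true iff it is not blatantly false) are all sound.
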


\section{By-Player-Anywhere-Different}
\label{section:by-player-anywhere-different}

The \ruleset{by-player-anywhere-different} ruleset consists of games between two players, True and False, who may choose to play on any unassigned variable on their turn.  Each player wins if the value of the formula matches their identity.

\subsection{Sample Game}

Consider the initial position with the same formula given in section \ref{section:sampleQBF}: $(\overline{x_0} \vee x_3 \vee \overline{x_1}) \wedge (x_2 \vee x_1 \vee \overline{x_6}) \wedge (x_4 \vee \overline{x_6} \vee x_0) \wedge (\overline{x_2} \vee \overline{x_4} \vee x_3)$.  The following is a legal sequence of plays from the initial position:

\begin{itemize}
  \item True chooses $x_3$:\\
    \begin{tabular}{l}
      $(\overline{x_0} \vee \textbf{T} \vee \overline{x_1}) \wedge (x_2 \vee x_1 \vee \overline{x_6}) \wedge (x_4 \vee \overline{x_6} \vee x_0) \wedge (\overline{x_2} \vee \overline{x_4} \vee \textbf{T})$ \\
      $ = (x_2 \vee x_1 \vee \overline{x_6}) \wedge (x_4 \vee \overline{x_6} \vee x_0)$
	\end{tabular}
  \item False chooses $x_2$:\\
	\begin{tabular}{l}
	  $ = (\textbf{F} \vee x_1 \vee \overline{x_6}) \wedge (x_4 \vee \overline{x_6} \vee x_0)$ \\
	  $ = (x_1 \vee \overline{x_6}) \wedge (x_4 \vee \overline{x_6} \vee x_0)$
	\end{tabular}
  \item True chooses $x_1$:\\
	\begin{tabular}{l}
	  $ = (\textbf{T} \vee \overline{x_6}) \wedge (x_4 \vee \overline{x_6} \vee x_0)$\\
	  $ = (x_4 \vee \overline{x_6} \vee x_0)$
	\end{tabular}
  \item False chooses $x_4$:\\
	\begin{tabular}{l}
	  $ = \textbf{F} \vee \overline{x_6} \vee x_0$\\
	  $ = \overline{x_6} \vee x_0$
	\end{tabular}
  \item True chooses $x_0$:\\
	\begin{tabular}{l}
	  $ = \overline{x_6} \vee \textbf{T}$ \\
	  $ = \textbf{T}$
	\end{tabular}
  \item No matter which order $x_5$ and $x_6$ are chosen, True has won the game.
\end{itemize}

\subsection{\cclass{PSPACE}-completeness}

\begin{theorem}[\ruleset{by-player-anywhere-different} \cclass{PSPACE}-completeness.]
  Ruleset \ruleset{by-player-anywhere-different} is also \cclass{PSPACE}-complete.  
\end{theorem}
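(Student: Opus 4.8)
The plan is to establish the two standard components of a \cclass{PSPACE}-completeness claim: membership in \cclass{PSPACE} and \cclass{PSPACE}-hardness.

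For membership, I would argue exactly as for \cproblem{QBF}. A position is described by the formula $f$ together with the current partial assignment, which takes polynomial space. Since each move assigns one previously-unassigned variable, every play terminates after at most $n$ moves, and the winner of a completed game is decided by evaluating $f$, a polynomial-time (hence polynomial-space) check. The recursive game-tree search---alternating between the two players and reusing the same workspace along each root-to-leaf branch---therefore runs in polynomial space, placing the ruleset in \cclass{PSPACE}.

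For hardness, I would reduce from \ruleset{Positive CNF}, which section \ref{section:positive-cnf} records as \cclass{PSPACE}-complete. The key observation is that a \ruleset{Positive CNF} game is already, verbatim, a \ruleset{by-player-anywhere-different} game: in \ruleset{Positive CNF} the True player only ever sets variables to true and the False player only to false (the \textbf{by player} toggle), either player may select any unassigned variable (\textbf{anywhere}), and at the end True wins exactly when $f$ evaluates to true while False wins otherwise (\textbf{different}). Consequently the reduction is essentially the identity map: given a positive $3$-CNF formula $f$, output the \ruleset{by-player-anywhere-different} position whose formula is $f$ with no variables yet assigned. Because a positive $3$-CNF formula is a special case of the arbitrary formulas allowed here, this is a legal position, and it is produced in (trivially) polynomial time.

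It then remains only to verify that the two games share identical move trees and outcomes, so that True wins the \ruleset{by-player-anywhere-different} instance if and only if True wins the original \ruleset{Positive CNF} game. The part to be careful about---really the only content of the argument---is confirming that the conventions line up on the nose: the same player moves first, each player's fixed Boolean value agrees, the set of legal moves (any unassigned variable) agrees, and the final win condition agrees. Since the \textbf{different} goal depends only on the completed formula's value, and never on move parity or the blatantly-false condition, there are no end-of-game subtleties to reconcile and the correspondence is immediate. This yields \cclass{PSPACE}-hardness, and together with membership, \cclass{PSPACE}-completeness.
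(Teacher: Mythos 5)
Your proposal matches the paper's argument: the paper also observes that every \ruleset{Positive CNF} instance is verbatim a \ruleset{by-player-anywhere-different} instance and uses the identity reduction to inherit \cclass{PSPACE}-hardness from Schaefer's result. Your additional explicit check of \cclass{PSPACE} membership is a standard step the paper leaves implicit, so the two arguments are essentially identical.
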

  
  \begin{proof}
	To show hardness, we reduce from \ruleset{positive CNF} (see Section \ref{section:positive-cnf}).  

	The rules for \ruleset{positive CNF} are exactly the same as \ruleset{by-player-anywhere-different}, except that not all formulas are allowed.  Thus, each instance of \ruleset{positive CNF} is also an instance of \ruleset{by-player-anywhere-different}, meaning the new game is ``automatically'' \cclass{PSPACE}-hard.
  \end{proof}

\section{Either-Local-Same}
\label{section:either-local-same}

The \ruleset{either-local-same} ruleset consists of the games between two players forced to play on a specific literal each turn while avoiding creating an blatantly false formula.  

\subsection{Sample Game}

As an example, consider the initial position with the same formula given in section \ref{section:sampleQBF}: $(\overline{x_0} \vee x_3 \vee \overline{x_1}) \wedge (x_2 \vee x_1 \vee \overline{x_6}) \wedge (x_4 \vee \overline{x_6} \vee x_0) \wedge (\overline{x_2} \vee \overline{x_4} \vee x_3)$.  The following is a legal sequence of plays from the initial position:

\begin{itemize}
  \item Even assigns \textbf{F} to $x_0$:\\
    \begin{tabular}{l}
      $(\textbf{T} \vee x_3 \vee \overline{x_1}) \wedge (x_2 \vee x_1 \vee \overline{x_6}) \wedge (x_4 \vee \overline{x_6} \vee \textbf{F}) \wedge (\overline{x_2} \vee \overline{x_4} \vee x_3)$ \\
      $ = (x_2 \vee x_1 \vee \overline{x_6}) \wedge (x_4 \vee \overline{x_6}) \wedge (\overline{x_2} \vee \overline{x_4} \vee x_3)$
	\end{tabular}
  \item Odd assigns \textbf{F} to $x_1$:\\
	\begin{tabular}{l}
      $(x_2 \vee \textbf{F} \vee \overline{x_6}) \wedge (x_4 \vee \overline{x_6}) \wedge (\overline{x_2} \vee \overline{x_4} \vee x_3)$ \\
      $ = (x_2 \vee \overline{x_6}) \wedge (x_4 \vee \overline{x_6}) \wedge (\overline{x_2} \vee \overline{x_4} \vee x_3)$
	\end{tabular}
  \item Even assigns \textbf{T} to $x_2$:\\
	\begin{tabular}{l}
      $(\textbf{T} \vee \overline{x_6}) \wedge (x_4 \vee \overline{x_6}) \wedge (\textbf{F} \vee \overline{x_4} \vee x_3)$ \\
      $ = (x_4 \vee \overline{x_6}) \wedge (\overline{x_4} \vee x_3)$
	\end{tabular}
  \item Odd assigns \textbf{F} to $x_3$:\\
	\begin{tabular}{l}
	  $(x_4 \vee \overline{x_6}) \wedge (\overline{x_4} \vee \textbf{F})$ \\
	  $ = (x_4 \vee \overline{x_6}) \wedge \overline{x_4}$ 
	\end{tabular}
  \item Even assigns \textbf{F} to $x_4$:\\
	\begin{tabular}{l}
	  $(\textbf{F} \vee \overline{x_6}) \wedge \textbf{T}$ \\
	  $\overline{x_6}$
	\end{tabular}
  \item Odd assigns \textbf{T} to $x_5$.  The formula does not change.
  \item Even assigns \textbf{F} to $x_6$.  The formula will now always evaluate to True.  There are no more variables, so Even wins!
\end{itemize}
  
\subsection{\cclass{PSPACE}-completeness}

\begin{theorem}[\ruleset{either-local-same} \cclass{PSPACE}-completeness.]
  Ruleset \ruleset{either-local-same} is also \cclass{PSPACE}-complete. 
\end{theorem}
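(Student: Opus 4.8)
The plan is to establish membership and hardness separately. Membership in \cclass{PSPACE} is routine: a position is the formula together with its partial assignment, the game tree has depth bounded by the number of variables, and deciding whether a candidate move yields a blatantly false formula takes only polynomial space, so a standard recursive search determines the winner. For hardness I would reduce from \cproblem{QBF} itself---the \ruleset{either-local-different} ruleset, already known to be \cclass{PSPACE}-complete---since it shares the \emph{local} and \emph{either} toggles, leaving only the \emph{goal} to be converted from \emph{different} to \emph{same}. I may assume the matrix $f$ is in CNF, as \cproblem{QBF} is \cclass{PSPACE}-complete even for CNF matrices; the point of working with CNF is that the \emph{blatantly false} test becomes transparent: $\BigAnd_i C_i$ is blatantly false exactly when some clause has all its literals assigned false, i.e. when some clause is violated.

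Let the \cproblem{QBF} instance quantify $\exists x_0 \forall x_1 \cdots$ over $x_0,\ldots,x_{n-1}$ with matrix $f = \BigAnd_i C_i$; padding with a dummy universally quantified variable if needed, I assume $n$ is even. I introduce one fresh \emph{shield} variable $s$ of highest index, $s = x_n$, and build the \ruleset{either-local-same} formula
\[ \psi = \Bigl( \BigAnd_i (C_i \vee s) \Bigr) \wedge (\overline{s}). \]
The shield is disjoined into every original clause, and the unit clause $(\overline{s})$ is adjoined to force $s$ to be false whenever it is played.

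The analysis splits into two phases. While $x_0,\ldots,x_{n-1}$ remain to be assigned, every clause $C_i \vee s$ still contains the unassigned literal $s$ and $(\overline{s})$ is untouched, so no clause can be violated and \emph{every} move is legal; the players therefore alternate free $\textbf{T}/\textbf{F}$ choices in index order---exactly the \cproblem{QBF} move structure, with the first player as Even/True and the second as Odd/False. Since $n$ is even, the first player must then assign $s$. As $(\overline{s})$ forbids $s = \textbf{T}$, the only candidate is $s = \textbf{F}$, which collapses $\psi$ to $f$ on the chosen assignment: if $f$ is satisfied the move is legal, all variables are set, and the first player wins by moving last; if $f$ is falsified then $s = \textbf{F}$ violates some $C_i$ while $s = \textbf{T}$ violates $(\overline{s})$, so the first player is stuck and loses. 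Hence the first player wins $\psi$ if and only if Even/True can force $f$ true in the embedded game, i.e. if and only if the \cproblem{QBF} instance is true.

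The crux is this goal translation. The shield together with $(\overline{s})$ is exactly the device that recasts Odd/False's ``different'' aim of falsifying $f$ as the ``same'' aim of leaving the opponent with no non-blatantly-false move, while the parity padding guarantees that it is the first player who is forced to cash the shield out. The remaining obligations are routine but essential: verifying that the shield keeps every clause alive throughout the first phase so that the simulation of unrestricted \cproblem{QBF} play is faithful, and that no game ends prematurely. With these in place the two games have the same winner, so \ruleset{either-local-same} is \cclass{PSPACE}-hard, and combined with membership it is \cclass{PSPACE}-complete.
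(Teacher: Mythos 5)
Your proof is correct, and it follows the same overall strategy as the paper---reduce from \cproblem{QBF} with a CNF matrix, pad the formula so that blatant falseness can only materialize on an Even-player turn, and fix the parity so that Even makes the last move when the matrix is satisfied---but your gadget is genuinely different. The paper works clause by clause: for each clause $\varphi_i$ whose largest literal index $l$ is odd it forms $\gamma_i = \varphi_i \vee (x_{l+1} \wedge \overline{x_{l+1}})$, so each clause stays ``alive'' exactly until the Even player sets $x_{l+1}$, and it then pads the variable count to an odd $m$. You instead use a single global shield $s$ of highest (even) index, disjoined into every clause and pinned down by the unit clause $(\overline{s})$, so that no legality check can fire until the very last move, where Even either cashes out a satisfied matrix or is stuck. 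Your version has some advantages: the faithfulness of the simulation is immediate, since phase one is literally unrestricted \cproblem{QBF} play, whereas in the paper's construction Even can be barred from certain value choices mid-game and one must observe that this never changes the winner; it also keeps the reduced formula a conjunction of disjunctions of literals (albeit with wide clauses), rather than clauses containing an embedded conjunction $(x_{l+1} \wedge \overline{x_{l+1}})$, which is relevant to the paper's open question about hardness for CNF-shaped inputs; and you explicitly supply the \cclass{PSPACE} membership argument that the paper leaves implicit. The paper's per-clause guards detect a doomed clause as early as possible, which is a different and slightly more delicate dynamic, but both constructions yield games with the same winner as the \cproblem{QBF} instance, so both establish the theorem.
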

  
  \begin{proof}
	To show hardness, we reduce from \ruleset{QBF}.  
  
    Assume our \ruleset{QBF} formula is written in conjunctive normal form (this subset of positions is still \cclass{PSPACE}-complete \cite{DBLP:journals/ipl/AspvallPT79}) using $n$ variables $x_0, \ldots, x_{n-1}$.  Let $\BigAnd_{i \in [c]} \varphi_i$ be the formula, with clauses $\varphi_0, \ldots, \varphi_{c-1}$.  
    
    For each $\varphi_i$, we create a new clause, $\gamma_i$, in the following way.  Let $l$ be the largest index of the literals in $\varphi_i$.  Now, $\gamma_i = \begin{cases}
                                        \varphi_i &, l \mbox{ is even} \\
                                        \left(\varphi_i \vee (x_{l+1} \wedge \overline{x_{l+1}})\right) &, l \mbox{ is odd}
	\end{cases}$.  Also, let $m = \begin{cases}
	  n+1 &, n \mbox{ is even} \\
	  n+2 &, n \mbox{ is odd}
	\end{cases}$.  In this way, $m$ will always be odd.
	
	The resulting position for \ruleset{either-local-same} consists of a formula with $c$ clauses, $\BigAnd_{i \in [c]} \gamma_i$, and has $m$ variables $x_0, \ldots, x_{m-1}$.  In some cases, no literal with index $m-1$ will appear in the formula.
	
	It remains to be shown that the winnability of the \ruleset{either-local-same} position is equivalent to the \ruleset{QBF} position.  Notice that in \ruleset{QBF}, if any one clause is blatantly false, the even/true player loses.  The $\gamma$-clauses simulate this in the reduced formula: the last assignment to a literal in a clause is always made by the even player.  If the clause would become blatantly false, even cannot move and loses.
	
	Alternatively, in order for odd to lose the game, the variables must have been set so that all clauses are true.  To maintain this condition, we just make sure the last variable index is even.  Thus, if all variables are successfully set, the even player will win the game after having made the last move.  To accomplish this, we enforce that $m$ must be odd, even if this means the literal $x_{m-1}$ does not appear in the reduced formula.
	
	Since the winnability of the two games are equal, the \ruleset{either-local-same} ruleset is \cclass{PSPACE}-hard.
  \end{proof}

\section{Either-Anywhere-Same}
\label{section:either-anywhere-same}

The \ruleset{either-anywhere-same} ruleset consists of the games between two players who can play either value on any unassigned variable each turn while avoiding creating an blatantly false formula.  

\subsection{Sample Game}

As an example, consider the initial position with the same formula given in section \ref{section:sampleQBF}: $(\overline{x_0} \vee x_3 \vee \overline{x_1}) \wedge (x_2 \vee x_1 \vee \overline{x_6}) \wedge (x_4 \vee \overline{x_6} \vee x_0) \wedge (\overline{x_2} \vee \overline{x_4} \vee x_3)$.  The following is a legal sequence of plays from the initial position:

\begin{itemize}
  \item Even assigns \textbf{F} to $x_6$:\\
    \begin{tabular}{l}
      $(\overline{x_0} \vee x_3 \vee \overline{x_1}) \wedge (x_2 \vee x_1 \vee \textbf{T}) \wedge (x_4 \vee \textbf{T} \vee x_0) \wedge (\overline{x_2} \vee \overline{x_4} \vee x_3)$ \\
      $ = (\overline{x_0} \vee x_3 \vee \overline{x_1})\wedge (\overline{x_2} \vee \overline{x_4} \vee x_3)$
    \end{tabular}
  \item Odd assigns \textbf{T} to $x_2$:\\
	\begin{tabular}{l}
	  $(\overline{x_0} \vee x_3 \vee \overline{x_1})\wedge (\textbf{F} \vee \overline{x_4} \vee x_3)$ \\
	  $ = (\overline{x_0} \vee x_3 \vee \overline{x_1})\wedge (\overline{x_4} \vee x_3)$
	\end{tabular}
  \item Even assigns \textbf{T} to $x_3$:\\
	\begin{tabular}{l}
	  $ = (\overline{x_0} \vee \textbf{T} \vee \overline{x_1})\wedge (\overline{x_4} \vee \textbf{T})$ \\
	  $\textbf{T}$ \\
	  From this point on, the formula will always evaluate to True.
	\end{tabular}
  \item With four variables remaining to assign and no chance the formula will be blatantly false, Even takes the last turn and wins.
\end{itemize}

\subsection{\cclass{PSPACE}-completeness}

\begin{theorem}[\ruleset{either-anywhere-same} \cclass{PSPACE}-completeness.]
  Ruleset \ruleset{either-anywhere-same} is also \cclass{PSPACE}-complete.  
\end{theorem}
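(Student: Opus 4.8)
The plan is to prove membership and hardness separately, following the template of the previous four theorems. For membership, I would observe that every \ruleset{either-anywhere-same} play lasts at most $m$ turns (one per variable), that each position---the current formula together with its partial assignment---has polynomial size, and that both the legality of a move and the winner at a terminal position are decidable in polynomial time; the latter holds because, as already noted, blatant falsity is easy to test. A standard recursive minimax evaluation of the game tree then runs in space proportional to the product of the depth (at most $m$) and the per-frame size (polynomial), placing the ruleset in \cclass{PSPACE}.

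For hardness, the first thing to notice is that \ruleset{either-anywhere-same} is an \emph{impartial} ruleset under the normal-play convention: from any position the set of legal moves (pick any unassigned variable, assign it either value, provided the result is not blatantly false) does not depend on whose turn it is, and a player who cannot move loses. I would therefore reduce from the impartial ruleset \ruleset{either-local-same}, proven \cclass{PSPACE}-complete in the previous section (alternatively, one may reduce directly from \cproblem{QBF} by first applying the \ruleset{either-local-same} construction). The aim is to take an \ruleset{either-local-same} instance, whose entire difficulty is that variables must be played in the fixed order $x_0, x_1, x_2, \ldots$, and to build an \ruleset{either-anywhere-same} instance with the same winner even though players may now play anywhere. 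The key steps, in order, are: (i) start from the formula $\BigAnd_{i} \gamma_i$ produced by the \ruleset{either-local-same} reduction; (ii) augment it with order-enforcing gadgets so that, at every stage of optimal play, exactly one unassigned variable admits a legal move while every other unassigned variable is immediately lost, making the intended order the only reasonable line; and (iii) add ``free'' padding variables, in the spirit of the $x_{l+1} \wedge \overline{x_{l+1}}$ trick already used, to keep the total move count of the correct parity so that the last mover coincides with the intended winner.

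The hard part will be step (ii), and it is genuinely the crux of the theorem. Under the ``same'' convention a move is illegal only when it \emph{immediately} falsifies some clause given the current assignment; one cannot forbid playing $x_k$ merely because an earlier variable is still unassigned, since an unassigned literal keeps every clause containing it alive (neither blatantly true nor blatantly false). Consequently the order cannot be forced ``negatively.'' Instead I would engineer, for each stage $k$, a gadget activated by the assignments already made to $x_0, \ldots, x_{k-1}$ that, once active, makes \emph{both} values of every later variable falsify some clause---rendering those variables temporarily unplayable---while leaving $x_k$ safe; assigning $x_k$ then deactivates the stage-$k$ gadget and activates the stage-$(k{+}1)$ one.

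Verifying that these clocking gadgets (a) can be written as polynomially many bounded-width clauses, (b) never accidentally make the intended move illegal, and (c) introduce no spurious winning deviation for either player is the bulk of the work. Once the ``one safe variable at a time'' invariant is established, the equivalence of the two games' winners follows by induction on the number of assigned variables, and \cclass{PSPACE}-hardness is immediate. Combined with the membership argument, this yields \cclass{PSPACE}-completeness.
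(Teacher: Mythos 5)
Your membership argument is fine, but step (ii) of your hardness plan contains a fatal gap: the ``clocking gadget'' you describe cannot exist. Observe that blatant falsity is \emph{monotone} under extension of the partial assignment---a false-assigned literal stays false-assigned, and each recursive case of the definition is preserved when further variables are assigned---so if assigning $x_j = v$ would make the formula blatantly false at some point, it would also do so at every later point. The set of legal (variable, value) moves can only shrink as the game proceeds. Your gadget requires the opposite behavior: a later variable must be unplayable in both values while the clock is at stage $k$ and then become playable again once stage $k$ is ``deactivated.'' No formula can do this, so the order of play cannot be enforced this way, and the reduction from \ruleset{either-local-same} collapses. (You half-identify the obstruction yourself when you note that an unassigned literal keeps its clause alive, but the proposed fix contradicts the monotonicity just described.)

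The paper avoids the problem entirely by choosing a source game that already permits playing anywhere: it reduces from \ruleset{Proper 2-Coloring}, an impartial \cclass{PSPACE}-complete graph game in which players color vertices with either of two colors subject to properness. Each vertex becomes a variable and each edge $(i,j)$ contributes the subformula $(x_i \wedge \overline{x_j}) \vee (\overline{x_i} \wedge x_j)$, which becomes blatantly false exactly when $x_i$ and $x_j$ receive the same value; the conjunction over all edges makes the legal moves of the two games coincide, and no ordering or parity gadget is needed. If you want to salvage a reduction for this ruleset, you should likewise start from a source whose move structure is already location-free rather than trying to simulate a forced order.
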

  
  \begin{proof}
	The reduction here is similar to the reduction from \ruleset{Snort} to \ruleset{by-player-anywhere-same} in section \ref{section:by-player-anywhere-same}.  Instead of \ruleset{Snort}, we'll reduce from \ruleset{Proper 2-Coloring}, a different graph game where players take turns painting uncolored vertices so that no two neighboring vertices have the same color.  In this game, both players can choose either color on their turn.  \ruleset{Proper 2-coloring} is impartial and \cclass{PSPACE}-complete \cite{DBLP:journals/tcs/BeaulieuBD13}.  
	
	The reduction is as follows.  Let $G = (V, E)$ be the \ruleset{Proper 2-coloring} graph and let $V = \{0, \ldots, n-1\}$.  The literals for the formula are: $x_0, x_1, \ldots, x_{n-1}$.  Blue corresponds to True and Red corresponds to False.  Now, for each edge, $(i,j) \in E$, we include: $(x_i \wedge \overline{x_j}) \vee (\overline{x_i} \wedge x_j)$.  Now this subformula will only become blatantly false if both variables are given the same value, corresponding to them being painted the same color in \ruleset{Proper 2-coloring}.
	
	The overall formula is the conjunction of these pieces: $\BigAnd_{(i,j) \in E} (x_i \wedge \overline{x_j}) \vee (\overline{x_i} \wedge x_j)$
	
	Since the two games are equivalent, \ruleset{either-anywhere-same} is \cclass{PSPACE}-complete.
  \end{proof}

\section{Either-Anywhere-Different}
\label{section:either-anywhere-different}

The \ruleset{either-anywhere-different} ruleset consists of the games between the two players who can play either value on any unassigned variable each turn.  The players have separate goals: the even player is attempting to set the entire formula to true, the odd player is shooting for false.

\subsection{Sample Game}

As an example, consider the initial position with the same formula given in section \ref{section:sampleQBF}: $(\overline{x_0} \vee x_3 \vee \overline{x_1}) \wedge (x_2 \vee x_1 \vee \overline{x_6}) \wedge (x_4 \vee \overline{x_6} \vee x_0) \wedge (\overline{x_2} \vee \overline{x_4} \vee x_3)$.  One legal sequence of plays from the initial position could be:

\begin{itemize}
  \item Even/True assigns \textbf{T} to $x_3$:\\
    \begin{tabular}{l}
      $(\overline{x_0} \vee \textbf{T} \vee \overline{x_1}) \wedge (x_2 \vee x_1 \vee \overline{x_6}) \wedge (x_4 \vee \overline{x_6} \vee x_0) \wedge (\overline{x_2} \vee \overline{x_4} \vee \textbf{T})$ \\
      $ = (x_2 \vee x_1 \vee \overline{x_6}) \wedge (x_4 \vee \overline{x_6} \vee x_0)$
    \end{tabular}
  \item Odd/False assigns \textbf{T} to $x_6$:\\
    \begin{tabular}{l}
      $(x_2 \vee x_1 \vee \textbf{F}) \wedge (x_4 \vee \textbf{F} \vee x_0)$ \\
      $ = (x_2 \vee x_1) \wedge (x_4 \vee x_0)$ 
    \end{tabular}
  \item Even/True confidently assigns \textbf{T} to $x_4$:\\
	\begin{tabular}{l}
	  $(x_2 \vee x_1) \wedge (\textbf{T} \vee x_0)$ \\
	  $ = x_2 \vee x_1$ 
	\end{tabular}
  \item Odd/False assigns \textbf{F} to $x_1$:\\
	\begin{tabular}{l}
	  $x_2 \vee \textbf{F}$ \\
	  $ = x_2$
	\end{tabular}
  \item Even/True assigns \textbf{T} to $x_2$ and wins.
\end{itemize}

\subsection{\cclass{PSPACE}-completeness}

Although the reduction for this completness proof is quite simple, we first introduce an intermediate ruleset for clarity:

\begin{definition}[\ruleset{Toy Positive CNF}]
  \emph{\ruleset{Toy Positive CNF}} is exactly the same as \ruleset{Positive CNF}, except that each player may choose to assign either True or False to the variable on their turn.
\end{definition}

\begin{lemma}[\ruleset{Toy Positive CNF}]
  \ruleset{Toy Positive CNF} is \cclass{PSPACE}-complete.  
\end{lemma}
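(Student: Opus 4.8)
The plan is to establish the two standard components of \cclass{PSPACE}-completeness separately. Membership in \cclass{PSPACE} is routine: a position of \ruleset{Toy Positive CNF} has at most $n$ unassigned variables, every play assigns exactly one of them, so the game tree has depth $n$ and the winner can be computed by the obvious recursive minimax using only polynomial workspace. The substance is hardness, and the idea is that \ruleset{Positive CNF} and \ruleset{Toy Positive CNF} share the \emph{same} set of instances, so the reduction is simply the identity map. What must be proved is that giving each player the extra freedom to choose a value never changes the winner.

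The engine of that argument is the monotonicity of positive CNF formulas: flipping a variable from false to true can only change the formula's value from false to true, never the reverse. I would first promote this to a statement about game values. For a partial assignment $\alpha$ on a fixed set of already-played variables and a player $p$ to move, let the value be $1$ if True wins under optimal play and $0$ otherwise. The claim is: if $\alpha$ and $\beta$ assign the same variables and $\alpha(v) \leq \beta(v)$ for every assigned $v$ (reading false as $0$, true as $1$), then the value at $(\alpha, p)$ is at most the value at $(\beta, p)$. This is proved by backward induction on the number of assigned variables; the base case is exactly formula monotonicity, and the inductive step goes through because True's move takes a maximum and False's a minimum over the \emph{same} option set $\{(v, b)\}$, while each option preserves the ordering $\alpha[v \mapsto b] \leq \beta[v \mapsto b]$.

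With monotonicity of the game value in hand, the domination of moves is immediate: at a True node, choosing $(v, \text{true})$ leads to an assignment dominating the one from $(v, \text{false})$, so setting true is always at least as good; symmetrically, False can always do at least as well by setting false. Hence \ruleset{Toy Positive CNF} admits an optimal line of play in which True only ever assigns true and False only ever assigns false, the remaining freedom being merely \emph{which} variable to choose. But that is precisely \ruleset{Positive CNF}, so the two games have identical winners on the identical instance; since \ruleset{Positive CNF} is \cclass{PSPACE}-complete, \ruleset{Toy Positive CNF} is \cclass{PSPACE}-hard. The one step demanding care—the main obstacle—is the monotonicity induction itself: one must keep the option sets of the two compared positions aligned and order the boolean game values correctly, so that True's maxima and False's minima both respect the dominance, rather than hand-waving from the slogan that ``setting true helps True.''
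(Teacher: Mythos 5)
Your proposal is correct and follows essentially the same route as the paper: the identity reduction from \ruleset{Positive CNF}, justified by the observation that in a negation-free formula neither player ever benefits from assigning the value opposite their identity. The only difference is one of rigor---the paper simply asserts this dominance of own-value moves, whereas you back it up with an explicit monotonicity induction on game values, which is a welcome strengthening rather than a departure.
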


  To prove this, we will reduce from \ruleset{Positive CNF} (see Section \ref{section:positive-cnf}).

\begin{proof}
  To begin this proof, we notice that it never improves a player's strategy to choose to assign a variable with the value opposite their identity.  Since no negations exist in the formula, $f$, the True player never benefits by assigning False and the False player never benefits by choosing True.
  
  Thus, any winning strategy cooresponds directly to a winning strategy in \ruleset{Positive CNF}.  Our reduction is trivial; no transformation is needed to reduce from \ruleset{Positive CNF} to \ruleset{Toy Positive CNF} to prove completeness.
\end{proof}

\begin{theorem}[\ruleset{either-anywhere-different} is \cclass{PSPACE}-complete.]
  \ruleset{either-anywhere-different} is also \cclass{PSPACE}-complete.  
\end{theorem}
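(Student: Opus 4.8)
The plan is to imitate the ``free'' hardness argument already used for \ruleset{by-player-anywhere-different} in Section~\ref{section:by-player-anywhere-different}, but this time leaning on the intermediate ruleset \ruleset{Toy Positive CNF} just shown to be \cclass{PSPACE}-complete. The whole proof should come down to observing that \ruleset{Toy Positive CNF} is nothing more than \ruleset{either-anywhere-different} restricted to a subclass of its possible positions, so that the identity map is a valid reduction.

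First I would settle membership in \cclass{PSPACE}, which is routine: a position is described by the formula together with the partial assignment, and every move fixes one previously-unassigned variable, so the game tree has depth at most $n$ and branching at most $2n$. A standard alternating (minimax) search over this tree, reusing space along a single root-to-leaf path and evaluating the fully-assigned formula at each leaf, decides the winner in polynomial space.

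For hardness I would compare the two rulesets rule by rule. In \ruleset{Toy Positive CNF} the board is a positive $3$-CNF formula, players alternate, each may assign either True or False to any unassigned variable, and at the end True wins exactly when the formula evaluates to true. These are verbatim the rules of \ruleset{either-anywhere-different} (the \emph{either} toggle permits both values, \emph{anywhere} permits any unassigned variable, and \emph{different} gives the even/True and odd/False players the true/false goals), with the single extra stipulation that the formula contain no negations. Since every positive $3$-CNF formula is in particular a Boolean formula, each \ruleset{Toy Positive CNF} position is already a legal \ruleset{either-anywhere-different} position, and the sequences of legal moves, together with the winner, coincide. Hence the identity transformation reduces \ruleset{Toy Positive CNF} to \ruleset{either-anywhere-different}, and the latter is \cclass{PSPACE}-hard.

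There is essentially no obstacle to overcome: as with \ruleset{by-player-anywhere-different}, the target ruleset is strictly more general than a known hard ruleset, so hardness is inherited for free. The only point that warrants care is confirming that the winning condition matches exactly---in particular that the \emph{different} goal really does declare True the winner precisely when the final formula is true---so that no reinterpretation of outcomes is needed when passing from \ruleset{Toy Positive CNF} to \ruleset{either-anywhere-different}.
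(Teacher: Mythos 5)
Your proposal is correct and follows essentially the same route as the paper: it invokes the just-proved \cclass{PSPACE}-completeness of \ruleset{Toy Positive CNF} and notes that its positions form a subset of \ruleset{either-anywhere-different} positions, so the identity map is the reduction. Your added remarks on \cclass{PSPACE} membership and on checking that the winning conditions coincide are sound elaborations the paper leaves implicit.
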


  To prove this, we will reduce from \ruleset{Toy Positive CNF}.
  
  \begin{proof}
    Note that the set of \ruleset{Toy Positive CNF} positions is exactly a subset of the set of \ruleset{either-anywhere-different} positions.  Thus, we can use the trivial (identity) reduction to show that \ruleset{either-anywhere-different} is also \cclass{PSPACE}-complete.
  \end{proof}

\section{Conclusions}

  This work defines seven new combinatorial game rulesets based on satisfying Boolean formulas.  Two of these are trivial in that the players do not have any choices to make.  The other five are all computationally difficult (\cclass{PSPACE}-complete) to determine which player has a winning strategy in the worst case.  These five are proved by simple reductions; two require no transformations on the positions.
  
  These five hard games differ in the style of allowed move options.  Each offers a different source problem for future \cclass{PSPACE}-hardness reductions.  By choosing the best of these rulesets that matches the target game, reductions for new rulesets could be easier to find.  In particular, two of the resulting \cclass{PSPACE}-hard games, \ruleset{Either-Local-Same} and \ruleset{Either-Anywhere-Same}, are also impartial and are good candidates to use to prove the hardness of new impartial games.

\section{Future Work}

  There is an obvious extension by adding more toggles to create more rulesets.  Another interesting open question is whether the hardness persists for certain types of formulas.
  
\subsection{Additional Toggles}

  This work can be expanded on by introducing more toggle properties for satisfiability games.  Schaefer defines many related games, including games with partitions on the variables between the two players \cite{DBLP:journals/jcss/Schaefer78}.  Certainly all relevant properties of rulesets are not covered here; expanding on the space of toggle properties will be very helpful for future \cclass{PSPACE}-hardness reductions for games.  Unfortunately, fully incorporating any new toggle requires doubling the number of overall rulesets.
  
\subsection{Open: Hardness in 3CNF}

  Unfortunately, three of the reductions used in this don't translate directly into conjunctive normal form (CNF) formulas with up to three literals (3CNF).  (Specifically, the reductions for \ruleset{By-Player-Anywhere-Same}, \ruleset{Either-local-same}, and \ruleset{Either-Anywhere-Same}.)  Although transformations to 3CNF exist, they may not preserve winnability in each of the rulesets.  Showing hardness in any of these would be a stronger result.

\bibliographystyle{plain}
\bibliography{/home/paithan/Dropbox/paithan}

\end{document}